\documentclass[conference]{IEEEtran}
\IEEEoverridecommandlockouts
\usepackage[noend]{algpseudocode}
\usepackage{algorithm}

\usepackage{graphicx}
\graphicspath{{./figs/}}

\usepackage{amsmath}
\usepackage{amssymb}
\usepackage{amsfonts}
\usepackage{amsthm}
\usepackage[mathcal]{eucal}
\usepackage{mathrsfs}
\usepackage{booktabs}
\usepackage{enumitem}
\usepackage{multirow}
\usepackage{bm}
\usepackage{threeparttable}
\usepackage{makecell}
\usepackage{placeins}
\usepackage{pifont}
\usepackage{circledsteps}
\usepackage{xspace}
\usepackage{xcolor,colortbl}
\usepackage{tikz}
\usepackage[normalem]{ulem}
\usepackage{inconsolata}
\usepackage[frozencache,cachedir=.]{minted}
\usepackage{subcaption}
\newtheorem{theorem}{Theorem}

\makeatletter
\renewcommand{\footnoterule}{%
    \kern -3pt
    \hrule \@width 0.28\columnwidth \@height 0.4pt
    \kern 2pt
}
\makeatother

\usepackage{cite}
\usepackage{url}
\usepackage[bookmarks=false]{hyperref}

\definecolor{CUpurple}{RGB}{136,43,142}

\hypersetup{
    colorlinks = true,
    citecolor  = CUpurple,
    linkcolor  = CUpurple,
    urlcolor   = CUpurple
}

\usepackage{cleveref}
\Crefformat{figure}{Fig.~#2#1#3}
\Crefname{subfigure}{Fig.}{Figs.}
\Crefname{figure}{Fig.}{Figs.}
\Crefformat{table}{TABLE~#2#1#3}

\definecolor{CUHKorange}{RGB}{244,106,18}
\definecolor{CUHKblue}{RGB}{0,111,190}
\definecolor{CUHKgreen}{RGB}{0,127,128}
\definecolor{CUHKred}{RGB}{228,46,36}
\definecolor{CUHKyellow}{RGB}{198,148,34}
\definecolor{CUHKdark}{RGB}{114,44,114}
\definecolor{CUHKmiddle}{RGB}{144,44,144}
\definecolor{CUHKlight}{RGB}{167,44,167}
\definecolor{CUHKpurple}{RGB}{117,15,109}
\definecolor{CUHKgold}{RGB}{221,163,0}
\definecolor{CUHKribbon}{RGB}{244,223,176}
\definecolor{CUHKblack}{RGB}{34,24,21}

\newcommand{\mathbbm}[1]{\text{\usefont{U}{bbm}{m}{n}#1}}
\newcommand{\minisection}[1]{\vspace{.02in}\noindent{\textbf{#1}}.}

\newcommand*\circled[1]{%
    \tikz[baseline=(char.base)]{%
        \node[shape=circle,color=blue,fill=blue!20,draw,inner sep=0.6pt] (char) {#1};%
    }%
}
\DeclareMathAlphabet\mathbfcal{OMS}{cmsy}{b}{n}

\newcommand{\CPPL}{\textit{CPPL}\xspace}
\newcommand{\CPPLIR}{\text{CPPL IR}\xspace}

\begin{document}

\title{
    \textit{CPPL}: A \underline{C}ircuit \underline{P}rompt \underline{P}rogramming \underline{L}anguage 
}

\iftrue
\author{
    Shuo Yin$^1$,
    Yihe Wang$^2$,
    Lancheng Zou$^1$,
    Xufeng Yao$^1$,
    Tinghuan Chen$^2$,
    Chen Bai$^{3,\dagger}$, \\
    Zhengrong Wang$^1$,
    Tsung-Yi Ho$^1$,
    Bei Yu$^{1,\dagger}$\thanks{$^\dagger$Corresponding authors.},\\
    $^1$The Chinese University of Hong Kong \\
    $^2$The Chinese University of Hong Kong (Shenzhen) \\
    $^3$Fudan University
}
\fi

\maketitle
\pagestyle{plain}

\begin{abstract}
Large language models (LLMs) have shown promise in register-transfer level (RTL) design automation, but direct RTL generation remains difficult to validate, optimize, and integrate with compiler-based hardware design flows.
Hardware compiler infrastructures such as CIRCT provide typed intermediate representations, legality checks, and optimization passes, yet current LLMs struggle to emit raw compiler IR because of MLIR syntax, SSA discipline, dialect-specific operations, and strict width constraints.
This paper presents \CPPL, a compiler-mediated design framework that turns LLM-assisted hardware generation into a statically checkable frontend problem rather than an unconstrained RTL text-generation task.
\CPPL combines a Python frontend DSL for declaring module interfaces and hierarchy with \CPPLIR, a JSON-based circuit IR designed to expose compiler-visible structure while remaining accessible to LLMs.
The compiler infers operation widths from declared module ports, validates generated IR, checks hierarchy and port bindings, and deterministically lowers the result to CIRCT for synthesizable Verilog generation.
On the RTLLM benchmark, \CPPL improves functional correctness over direct Verilog and direct CIRCT IR generation, while CIRCT optimization reduces post-synthesis AIG node counts.
These results show that a compiler-mediated interface can make LLM-assisted hardware design more reliable, analyzable, and amenable to backend optimization.
\CPPL is available at \url{https://github.com/SawyDust1228/CPPL}.

\end{abstract}

\section{Introduction}
\label{sec:introduction}

Large language models (LLMs)~\cite{qwen3,deepseek,gpt} are increasingly used for register-transfer level (RTL) design automation, including Verilog generation, repair, and verification-oriented coding~\cite{verilogeval,rtllm,origen,betterv,rtlcoder}.
These systems lower the barrier for hardware design by translating natural-language specifications into executable hardware descriptions.
However, most existing LLM4RTL flows follow an end-to-end generation paradigm: the model emits final RTL text, which is then checked by simulators or synthesis tools.
This paradigm is convenient, but from a design automation perspective it leaves three recurring problems unresolved.
First, generated RTL can be syntactically invalid or incompatible with downstream tools.
Second, syntactically valid RTL often fails functional tests because the model must simultaneously reason about behavior, structure, widths, and corner cases.
Third, end-to-end RTL generation exposes little intermediate structure to hardware compiler infrastructures that already provide typed representations, legality checks, canonicalization, and optimization.

Hardware compiler infrastructures such as CIRCT~\cite{circt} provide a natural way to address these issues.
CIRCT represents circuits in MLIR-based dialects and can lower well-formed intermediate representations into synthesizable Verilog while applying standard compiler transformations.
In principle, an LLM could generate CIRCT IR directly and thereby benefit from compiler verification and optimization.
In practice, this is difficult.
CIRCT IR exposes MLIR syntax, SSA naming, dialect-specific operation constraints, and strict type requirements.
Our profiling results in \Cref{sec:background-and-motivation} show that even strong commercial models generate CIRCT IR with much lower correctness than Verilog, despite receiving format-specific prompts.
This gap suggests that compiler-backed LLM hardware generation needs a frontend interface that is more structured than natural language or raw RTL, yet easier for LLMs to produce than low-level compiler IR.

\begin{figure}[!tb]
     \centering
    \includegraphics[width=\linewidth]{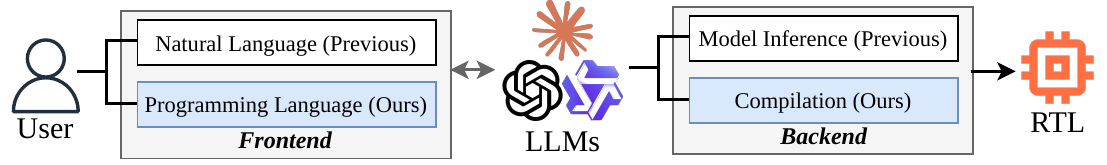}
    \caption{\textit{CPPL} combines LLM-based generation with compiler-mediated circuit construction and optimization.}
    \label{fig:intro}
\end{figure}

This paper presents \CPPL, a compiler-mediated framework for LLM-assisted hardware generation.
\CPPL introduces a Python-based frontend DSL that captures module interfaces and structural hierarchy explicitly, while leaving implementation intent in an LLM-friendly form.
The frontend elaborates fixed ports, module instances, and connection structure before LLM generation, preventing the model from freely inventing incompatible interfaces or hierarchy.
For the model-generated part, \CPPL uses \CPPLIR, a JSON-based intermediate representation that encodes circuit operations in a regular schema.
\CPPLIR is designed to retain compiler-visible circuit structure while avoiding raw MLIR syntax; the compiler performs syntax validation, width inference, structural checks, and deterministic lowering to CIRCT IR.
The resulting CIRCT program is then compiled and optimized to Verilog.
In this way, \CPPL shifts LLM generation from unstructured RTL or raw CIRCT IR toward a typed, structurally constrained frontend that can be checked before backend code generation.
\Cref{fig:intro} summarizes this shift.
Unlike prior LLM4RTL flows that rely on natural-language prompts and direct model inference to produce RTL, \CPPL asks users to describe hardware through a programming-language frontend and delegates RTL construction to a compiler backend.
This separation keeps interface and hierarchy information explicit before LLM generation, and moves legality checking, refinement, and optimization into the compilation path.

Our evaluation on the RTLLM benchmark~\cite{rtllm} shows that direct Verilog generation achieves high syntax correctness but still suffers from a substantial functionality gap, while direct CIRCT IR generation is significantly less reliable.
\CPPL closes this gap by assigning interface construction, hierarchy elaboration, legality checking, and type recovery to the compiler while leaving the model to generate a constrained circuit representation.
Across the evaluated models, \CPPL improves functional correctness over both direct Verilog generation and direct CIRCT IR generation.
We also evaluate synthesis quality using post-\texttt{aigmap} node counts and show that CIRCT optimization passes can reduce synthesized circuit size.

This paper makes the following contributions:
\begin{itemize}[leftmargin=*, itemindent=0pt, listparindent=0pt]
    \item We identify the mismatch between LLM generation capabilities and raw CIRCT IR generation, showing that compiler IR syntax and semantics remain difficult for current LLMs.
    \item We propose \CPPL, an open-source compiler-mediated hardware generation framework that combines a Python frontend DSL, a JSON-based circuit IR, static checking, and CIRCT-based lowering.
    \item We formalize the structural preservation and width inference principles used by \CPPL to keep module hierarchy, port bindings, and operation types consistent during generation.
    \item We evaluate \CPPL on RTLLM and demonstrate improved functional correctness and synthesis-level compactness compared with direct generation baselines.
\end{itemize}

\section{Background \& Motivation}
\label{sec:background-and-motivation}

\subsection{Related Works on LLM4RTL}
\label{sec:related-works}

Recent LLMs have shown promising capabilities in generating RTL code from high-level descriptions, attracting significant attention in the research community.
Several approaches have been proposed to improve LLM-based RTL generation:~\cite{rtlcoder,rtlpp} explore LLM fine-tuning and graph embedding techniques to enhance generation performance, while~\cite{origen,betterv} introduce code-to-code alignment methods to improve the quality of generated RTL code.
\cite{mage,spec2rtl} leverage agent systems to decompose complex hardware generation tasks into manageable subtasks, improving RTL generation accuracy.
Beyond code generation, LLMs also play crucial roles in other RTL-related tasks.
\cite{symrtlo, rtlrewriter, aspen} optimize generated RTL code using LLMs and symbolic reasoning techniques, while~\cite{learntodebug,hlsdebugger,assertllm} use LLMs for RTL debugging and verification.
These works demonstrate the potential of LLMs across RTL generation, optimization, debugging, and verification.
They primarily treat RTL as the generation target, whereas \CPPL studies a complementary question: how to expose compiler-level hardware design flows to LLMs without requiring the model to directly emit low-level CIRCT IR.

\subsection{Circuit IR Compilers and Tools}
\label{sec:circt}

Circuit IR Compilers and Tools (CIRCT)~\cite{circt} is a compiler infrastructure built on top of the Multi-Level Intermediate Representation (MLIR)~\cite{mlir} for hardware design~\cite{assassyn, cement, chisel, calyx}, optimization~\cite{pipertl, combrewriter}, and simulation~\cite{llhd, Khronos}.
CIRCT provides core dialects to represent circuits at a unified level of abstraction, including:
\begin{itemize}[leftmargin=*, itemindent=0pt, listparindent=0pt]
	\item The \texttt{hw} dialect offers function-like semantics to represent module information and data types. For instance, \texttt{hw.module} handles the details of a module, while \texttt{hw.instance} represents the instantiation of these modules.
	\item The \texttt{comb} dialect represents combinational components in RTL. For example, \texttt{comb.add} models a multi-input adder in combinational logic.
	\item The \texttt{seq} dialect represents sequential logic. \texttt{seq.compreg} models registers in CIRCT, containing the piped value and reset as inputs. The \texttt{seq} dialect also includes a memory type that describes memory behavior.
	\item The \texttt{sv} dialect represents the semantics of SystemVerilog. For example, \texttt{sv.always} represents an always block, which is commonly used to define sequential logic.
\end{itemize}
The core dialects of CIRCT effectively support flexible transformations and optimization, while targeting different backends for Verilog generation, simulation, and verification.

\subsection{Motivations for CPPL}
\label{sec:motivation}

The previous works on LLMs for RTL generation discussed in \Cref{sec:related-works} primarily follow an end-to-end paradigm, where the LLM directly generates the final RTL code.
This approach is convenient, but it exposes several limitations when used as a design methodology:
\begin{enumerate}[label=\protect\circled{\arabic*}, leftmargin=*, itemindent=0pt, listparindent=0pt]
    \item Generated RTL may contain syntax or tool-compatibility errors that are only discovered after downstream compilation.
    \item Functional behavior, bitwidth consistency, and structural hierarchy are entangled in a single text-generation task, making failures difficult to localize.
    \item Compiler analyses and transformations are applied only after RTL emission, so the generation process receives little benefit from typed IRs, legality checks, and canonical compiler optimizations.
\end{enumerate}

Compiler-based hardware design flows address many of these issues by making structure, legality, and types explicit before backend code generation.
For example, CIRCT can lower well-formed circuit IR into synthesizable Verilog for backend EDA tools.
It also provides standard optimization passes, including \textit{constant folding} (CF), \textit{dead code elimination} (DCE), and \textit{common subexpression elimination} (CSE), to improve the generated design.
Furthermore, CIRCT IR follows a strict type discipline derived from the LLVM/MLIR ecosystem~\cite{llvm}, enabling early verification before final RTL emission.

Based on the above discussion, we propose a compiler-mediated generation paradigm that combines the natural-language and code-generation capabilities of LLMs with the typed compilation flow provided by CIRCT.
We designate this paradigm as \CPPL (\underline{C}ircuit \underline{P}rompt \underline{P}rogramming \underline{L}anguage), a hardware generation framework that inserts an LLM-friendly, statically checkable circuit IR between model generation and CIRCT lowering.
\CPPL exposes structural and type constraints before backend code generation, while still allowing designers to express implementation intent at a high level.

\subsection{Challenges for Leveraging CIRCT IR in CPPL}
\label{sec:challenges}

Although CIRCT provides the desired compiler support, directly exposing CIRCT IR as the LLM output target presents several challenges:

\begin{enumerate}[label=\protect\circled{\arabic*}, leftmargin=*, itemindent=0pt, listparindent=0pt]
    \item There are limited public datasets of CIRCT IR examples, which makes it difficult to train LLMs to understand and generate code in this intermediate representation.
    \item CIRCT is an actively developing project, and its IR syntax and dialect definitions continue to evolve. This makes it difficult for LLMs to maintain compatibility with a specific compiler version.
    \item As discussed in~\cite{llm4ir}, the Static Single Assignment (SSA) form used by CIRCT IR can be difficult for LLMs to understand and generate.
\end{enumerate}

We conduct a profiling experiment to assess LLM performance in end-to-end generation of Verilog code versus CIRCT IR.
In this experiment, we use system prompts to instruct LLMs to generate CIRCT IR compatible with our experimental setup in \Cref{sec:exp-setup}.
We use the RTLLM benchmark~\cite{rtllm} to evaluate the \texttt{pass@1} metric on a set of recent commercial LLMs (see \Cref{sec:exp-setup} for configuration details).
As \Cref{fig:profiling} shows, all evaluated models exhibit a clear performance gap between generating Verilog code and CIRCT IR, with \texttt{pass@1} scores for CIRCT IR generation being substantially lower.
This result indicates that the direct compiler-IR path is not yet a reliable frontend for LLM-assisted hardware design.

\begin{figure}[tb!]
    \centering
    \includegraphics[width=0.8\linewidth]{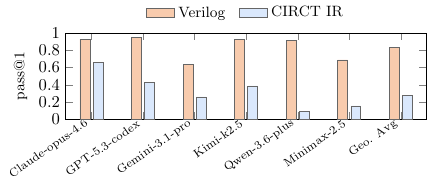}
    \caption{\texttt{pass@1} score gap between Verilog and CIRCT IR for syntax correctness.}
    \label{fig:profiling}
\end{figure}

\begin{figure}[tb!]
    \centering
    \includegraphics[width=0.95\linewidth]{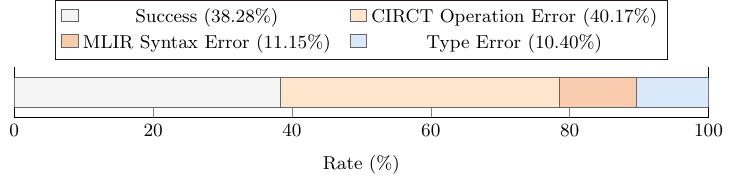}
    \caption{The geometric average error type breakdown of CIRCT IR generation across all evaluated models on the RTLLM benchmark.}
    \label{fig:breakdown}
\end{figure}

We further evaluate the detailed error types in generated CIRCT IR and categorize them into three main categories: violations of MLIR format semantics (\textit{MLIR Syntax Error}), use of unsupported or incorrect CIRCT operations (\textit{CIRCT Operation Error}), and type system mismatches (\textit{Type Error}).
As shown in \Cref{fig:breakdown}, the largest error category is CIRCT operation errors, indicating that LLMs often fail to satisfy dialect-specific operation constraints.
MLIR syntax errors and type errors also contribute to the overall error rate, confirming that both the concrete IR format and the compiler type system are challenging generation targets.

These challenges motivate an intermediate representation that keeps the compiler-visible structure and type information needed by CIRCT, but presents them in a form that is easier for LLMs to generate and easier for the compiler to validate.

\section{CPPL Framework}
\label{sec:cppl-framework}

\begin{figure}[!tb]
     \centering
    \includegraphics[width=.82\linewidth]{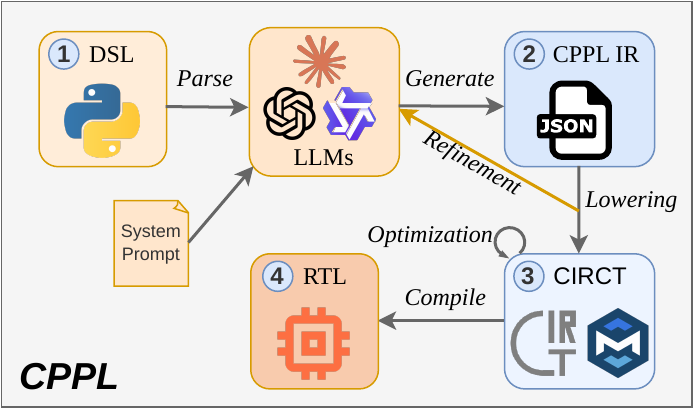}
    \caption{\textit{CPPL} framework overview.}
    \label{fig:cppl}
\end{figure}

In this section, we introduce the \CPPL framework, a compiler-mediated design flow for LLM-assisted hardware generation with CIRCT as the backend compilation infrastructure.
\Cref{fig:cppl} illustrates the overall workflow: \CPPL provides a Python-based frontend DSL for specifying interfaces and hierarchy, an LLM-facing JSON IR for behavioral generation, a compiler-driven refinement loop, and a deterministic lowering path to CIRCT.
\CPPL is implemented through APPL~\cite{appl} and CIRCT's Python bindings, which together provide a flexible platform for hardware design.

\begin{figure}[tb!]
\centering
\begin{minted}[numbersep=3pt, linenos, xleftmargin=7pt, breaklines,breakanywhere,frame=single,fontsize=\tiny,baselinestretch=0.92]{python}
from cppl import module, In, Out

@module
def Adder8(a: In[8], b: In[8]) -> {"sum": Out[8]}:
    """out equals a plus b (8-bit addition)."""

@module
def ALU(op_code: In[2], op_a: In[8], op_b: In[8]) -> {"res": Out[8], "zero": Out[1]}:
    return f"""
    Simple ALU that uses an Adder8 instance for addition.
    Based on op_code (2-bit selector):
    - 00: res = {Adder8(op_a, op_b)} (result from Adder8 instance)
    - 01: res = op_a - op_b
    - 10: res = op_a & op_b (bitwise AND)
    - 11: res = op_a | op_b (bitwise OR)
    zero is 1 when res equals 0, otherwise 0.
    """
\end{minted}

\caption{The \CPPL description of a 2-bit opcode ALU with an 8-bit adder instance.}
\label{fig:alu-cppl}
\end{figure}

\subsection{Frontend DSL for Hardware Design}
\label{sec:frontend-dsl}

\minisection{Language Design}
\Cref{fig:alu-cppl} shows a \CPPL program that defines an \texttt{ALU} module.
\CPPL follows a function-oriented style: each hardware module is written as a Python function decorated with the \texttt{@module} decorator, which triggers JIT compilation by the \CPPL compiler.
The function signature specifies the module boundary.
Arguments annotated with \texttt{In} define input ports, and return values annotated with \texttt{Out} define output ports, together with their bit widths.
Because the interface is declared explicitly, generated hardware is constrained to use fixed and well-typed ports, reducing the opportunity for the LLM to introduce missing, extra, or width-mismatched interfaces.

The function body provides the implementation intent through a docstring.
This docstring can mix natural-language descriptions with structured \CPPL constructs that describe module instantiations and connections.
For example, in \Cref{fig:alu-cppl}, the \texttt{ALU} module instantiates an \texttt{Adder8} module for operand addition.
The instantiation is expressed using Python string-formatting syntax, where a formatted module call explicitly binds the child module to ports declared in the enclosing function signature or in the local implementation context.
In this way, \CPPL keeps the high-level specification natural for designers while making the module hierarchy and interface bindings explicit enough for deterministic frontend elaboration.

\minisection{Structural Semantics Preservation}
We next formalize the structural part of the frontend DSL and show that the hierarchy encoded by \CPPL is preserved in the generated circuit representation.
The key point is that string-formatting constructs are not treated as unconstrained free-form text.
Instead, each formatted module call is interpreted as an explicit structural directive: it is lowered to a module-instance vertex, while the associated port bindings are lowered to wiring edges.
Therefore, the frontend does not rely on the LLM to rediscover hierarchy from natural language; it deterministically elaborates the hierarchy specified by the \CPPL syntax.
Let $s$ be a well-formed architectural phrase under scope environment $\Gamma$.
The judgment $\Gamma;\rho \vdash s \Downarrow H$ means that, under parent module or instance $\rho$, phrase $s$ elaborates to a circuit graph $H=(V,E_h,E_w)$, where $V$ is the set of module/instance vertices, $E_h$ is the set of hierarchy edges, and $E_w$ is the set of wiring edges.
Let $\mathsf{Struct}(H)=(V,E_h)$ denote the hierarchy-only projection, and let $\mathsf{Tree}_{\rho}(s)$ denote the module-instance parse tree obtained from the syntactic nesting of module declarations and instance calls in $s$ under root $\rho$.

The following core big-step rules capture the relevant structural behavior:
{\footnotesize
\[
\frac{}{\Gamma;\rho \vdash \texttt{module }m\{\} \Downarrow (\{m\},\{(\rho,m)\},\emptyset)}
\ \textsc{(S-Prim)}
\]
\[
\frac{\Gamma;\rho \vdash s \Downarrow (V,E_h,E_w)\quad M\in\Gamma}
{\Gamma;\rho \vdash \texttt{inst }x{:}M;\ s \Downarrow (V\cup\{x\},E_h\cup\{(\rho,x)\},E_w)}
\ \textsc{(S-Inst)}
\]
\[
\frac{\Gamma;\rho \vdash s \Downarrow (V,E_h,E_w)\quad p,q\in \mathsf{Ports}_{\Gamma}(V)}
{\Gamma;\rho \vdash \texttt{connect }p\to q;\ s \Downarrow (V,E_h,E_w\cup\{(p,q)\})}
\ \textsc{(S-Conn)}
\]
\[
\frac{\Gamma;\rho \vdash s_1 \Downarrow (V_1,E_{h1},E_{w1})\quad
      \Gamma;\rho \vdash s_2 \Downarrow (V_2,E_{h2},E_{w2})}
{\Gamma;\rho \vdash s_1;s_2 \Downarrow (V_1\cup V_2,E_{h1}\cup E_{h2},E_{w1}\cup E_{w2})}
\ \textsc{(S-Seq)}
\]
}

\begin{theorem}[Structural Semantics Preservation]
If $s$ is well formed and $\Gamma;\rho \vdash s \Downarrow H$, then $\mathsf{Struct}(H)$ is isomorphic to $\mathsf{Tree}_{\rho}(s)$.
\end{theorem}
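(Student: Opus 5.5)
We argue by structural induction on the derivation of $\Gamma;\rho \vdash s \Downarrow H$, which by the shape of the rules coincides with induction on the syntax of $s$. Before the induction we fix $\mathsf{Tree}_{\rho}(s)$ precisely, by recursion on the same grammar:
\begin{itemize}
\item $\mathsf{Tree}_{\rho}(\texttt{module }m\{\})$ is the tree with root $\rho$ and a single child $m$.
\item $\mathsf{Tree}_{\rho}(\texttt{inst }x{:}M;\ s)$ adds a child $x$ of $\rho$ to $\mathsf{Tree}_{\rho}(s)$.
\item $\mathsf{Tree}_{\rho}(\texttt{connect }p\to q;\ s)=\mathsf{Tree}_{\rho}(s)$.
\item $\mathsf{Tree}_{\rho}(s_1;s_2)$ is the union of $\mathsf{Tree}_{\rho}(s_1)$ and $\mathsf{Tree}_{\rho}(s_2)$, glued at the common root $\rho$.
\end{itemize}
We view both $\mathsf{Tree}_\rho(s)$ and $\mathsf{Struct}(H)$ as rooted graphs whose vertex set includes $\rho$. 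The isomorphism is the identity on vertex names, extended by $\rho\mapsto\rho$. The claim to prove is then the stronger, induction-friendly statement that $\mathsf{Struct}(H)\cup\{\rho\}$ and $\mathsf{Tree}_\rho(s)$ have the same vertex set and the same edge set.

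The cases follow the rules.
\begin{itemize}
\item \textsc{(S-Prim)}: both sides are $(\{\rho,m\},\{(\rho,m)\})$.
\item \textsc{(S-Inst)}: the induction hypothesis gives equality for $s$, and both constructions add the same vertex $x$ and edge $(\rho,x)$. The side condition $M\in\Gamma$ only ensures the derivation exists.
\item \textsc{(S-Conn)}: only $E_w$ changes, and $\mathsf{Struct}$ discards $E_w$. The tree is unchanged by definition, so this case is immediate. It is the formal content of the claim that wiring cannot perturb hierarchy.
\item \textsc{(S-Seq)}: apply the induction hypothesis to $s_1$ and $s_2$ and take componentwise unions. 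Union commutes with the projection $\mathsf{Struct}$ and with gluing trees at the common root, which closes the case.
\end{itemize}

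The main obstacle is that set unions can collapse vertices: under \textsc{(S-Seq)} or \textsc{(S-Inst)}, a name $x$ already present in $V$ would merge two syntactically distinct nodes of the parse tree. It would also break tree-ness, for example through a repeated edge or a cycle if $x=\rho$. This is exactly where well-formedness of $s$ must be used. I would define it to include:
\begin{itemize}
\item instance and module names in $s$ are pairwise distinct and distinct from $\rho$;
\item every port referenced in a $\texttt{connect}$ belongs to $\mathsf{Ports}_\Gamma(V)$.
\end{itemize}
I would also prove a small lemma, by the same induction, that $V$ is exactly the set of names declared in $s$. With distinctness, the unions in \textsc{(S-Seq)} and \textsc{(S-Inst)} are disjoint on vertices. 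So the combined graph is a star, or more generally a tree, rooted at $\rho$ with no identifications. Under the identity-on-names map it is then a genuine isomorphism rather than a mere quotient.

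Finally, I would note that well-formedness is inherited by subphrases: both $s_1$ and $s_2$ are well formed when $s_1;s_2$ is. The induction hypothesis therefore applies in every case, and the theorem follows.
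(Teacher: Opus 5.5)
Your proposal is correct and follows the same route as the paper: induction on the elaboration derivation with the four cases \textsc{S-Prim}, \textsc{S-Inst}, \textsc{S-Conn}, \textsc{S-Seq}, where the key observation is that \textsc{S-Conn} touches only $E_w$ and so leaves $\mathsf{Struct}(H)$ unchanged. Your additions make precise what the paper leaves implicit in ``well formed'' and in its union step for \textsc{S-Seq}: an explicit recursive definition of $\mathsf{Tree}_\rho(s)$, adding the root $\rho$ to the vertex set (the rules never place it in $V$), and requiring declared names to be pairwise distinct and distinct from $\rho$ so that the unions cannot collapse nodes.
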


\begin{proof}
By induction on the derivation of $\Gamma;\rho \vdash s \Downarrow H$.
\textsc{S-Prim} adds one module vertex and one hierarchy edge from $\rho$ to that module, matching the corresponding leaf in $\mathsf{Tree}_{\rho}(s)$.
\textsc{S-Inst} adds exactly one instance vertex and one hierarchy edge from the current parent $\rho$ to the instance $x$; by the induction hypothesis, the remaining phrase $s$ preserves its syntactic hierarchy, so the whole graph matches the parse tree after the instance expansion.
\textsc{S-Conn} only updates $E_w$, and thus leaves $\mathsf{Struct}(H)$ unchanged.
\textsc{S-Seq} follows from the induction hypotheses for $s_1$ and $s_2$ and from taking the union of their hierarchy vertices and edges under the same parent $\rho$.
Therefore, for every well-formed phrase, the hierarchy represented in the generated graph is the same as the hierarchy specified by the DSL syntax.
\end{proof}

\minisection{Code Generation}
\Cref{fig:alu-compile} illustrates the compilation flow from the \CPPL \texttt{ALU} module to Verilog.
The frontend first parses the module signatures, structured docstring constructs, and hierarchy implied by formatted module calls.
It then emits the corresponding CIRCT IR and invokes CIRCT's code-generation pipeline to produce Verilog.
This flow separates the LLM-facing specification from the backend representation: the LLM helps generate implementation details, while \CPPL and CIRCT enforce structural consistency and backend legality.

\begin{figure}[tb!]
\centering
\begin{minted}[numbersep=3pt, linenos, xleftmargin=7pt,breaklines,frame=single, highlightlines={9,11},fontsize=\tiny]{python}
from cppl import Design
design = Design() # Create a new design
design.add(ALU) # Add the top module to the design
design.to_verilog() # Compile the design to Verilog
\end{minted}
\caption{Code snippet using \CPPL to compile the ALU example into Verilog.}
\label{fig:alu-compile}
\end{figure}

\begin{figure}[tb!]
\centering
\begin{minted}[numbersep=3pt, linenos, xleftmargin=7pt,breaklines,breakanywhere, frame=single,fontsize=\tiny]{json}
[
  {
    "name": "Adder8",
    "ports": {
      "a":   { "dir": "input",  "width": 8 }, 
      "b":   { "dir": "input",  "width": 8 }, 
      "sum": { "dir": "output", "width": 8 }
    },
    "body": [
      { "id": "sum_val", "op": "add", "args": ["a", "b"] },
      { "op": "output", "args": { "sum": "sum_val" } }
    ]
  },
  {
    "name": "ALU",
    "ports": {
      "op_code": { "dir": "input",  "width": 2 }, 
      "op_a":    { "dir": "input",  "width": 8 }, 
      "op_b":    { "dir": "input",  "width": 8 }, 
      "res":     { "dir": "output", "width": 8 }, 
      "zero":    { "dir": "output", "width": 1 }
    },
    "body": [
      { "id": ["adder8_sum"], "op": "instance", "module": "Adder8", "args": { "a": "op_a", "b": "op_b" } },
      { "id": "sel0", "op": "extract", "args": ["op_code"], "lowBit": 0, "width": 1 },
      { "id": "sel1", "op": "extract", "args": ["op_code"], "lowBit": 1, "width": 1 },
      { "id": "sub_res", "op": "sub", "args": ["op_a", "op_b"] },
      { "id": "and_res", "op": "and", "args": ["op_a", "op_b"] },
      { "id": "or_res",  "op": "or",  "args": ["op_a", "op_b"] },
      { "id": "mux_lo", "op": "mux", "args": ["sel0", "sub_res", "adder8_sum"] },
      { "id": "mux_hi", "op": "mux", "args": ["sel0", "or_res", "and_res"] },
      { "id": "res_mux", "op": "mux", "args": ["sel1", "mux_hi", "mux_lo"] },
      { "id": "any_set", "op": "or_reduce", "args": ["res_mux"] },
      { "id": "is_zero", "op": "not", "args": ["any_set"] },
      { "op": "output", "args": { "res": "res_mux", "zero": "is_zero" } }
    ]
  }
]
\end{minted}
\caption{The \CPPLIR generated from the \CPPL code of the ALU example.}
\label{fig:alu-json}
\end{figure}

\subsection{JSON-based Intermediate Representation}
\label{sec:json-ir}

As discussed in \Cref{sec:challenges}, directly generating CIRCT IR poses challenges for LLMs.
However, LLMs' demonstrated ability to generate Verilog code suggests they possess some understanding of circuit structure and semantics. 
We leverage this capability by introducing an LLM-friendly intermediate representation that preserves compiler-visible structure while avoiding raw MLIR syntax.
JSON is a widely adopted data format with strong LLM support, as evidenced by its use in LLM serving and application frameworks~\cite{sglang, vllm, llamaindex}.
To this end, we design a JSON-based IR, \CPPLIR, that captures essential structural and behavioral circuit information in a regular, statically checkable format.

\minisection{Design Principles}
\CPPLIR is designed around three principles.
First, it separates the model-facing representation from the backend IR syntax, so LLMs do not need to emit MLIR punctuation, SSA names, or dialect assembly forms directly.
Second, it preserves compiler-relevant information: modules, ports, instances, operation identifiers, and operands are explicit fields rather than unstructured text.
Third, it defines a deterministic lowering contract to CIRCT, allowing the frontend to validate generated programs before Verilog emission.
This design enables early error detection and iterative refinement, while insulating the LLM from version-specific CIRCT syntax changes.
To reduce type mismatch errors, \CPPLIR is also designed with an inferable width system that can be statically checked during JSON-to-CIRCT compilation, as discussed in \Cref{sec:compile}.

\Cref{fig:alu-json} shows the \CPPLIR generated from the \CPPL description of the ALU example in \Cref{fig:alu-cppl}.
The function signature is translated into module declarations and port maps, corresponding to the \CPPL code structure.
Instance items are automatically inserted into the module body by the \CPPL compiler through parsing the docstring and formatted module calls prior to LLM generation, preserving module hierarchy and port bindings.
The remaining operations are then generated by the LLM based on the implementation intent described in the docstring.

\begin{figure}[!tb]
\centering
\begingroup
\scriptsize
\setlength{\arraycolsep}{3pt}
\[
\begin{array}{@{}llll@{}}
m\in\mathsf{ModName} & p\in\mathsf{PortName} &
x\in\mathsf{Id} & o\in\mathsf{Opcode}
\end{array}
\quad
w,n\in\mathbb{N}^{+}
\]
\vspace{-0.9em}
\[
\begin{array}{@{}rcl@{}}
D &::=& [M_1,\ldots,M_n]\\
M &::=& \{\texttt{"name"}:m,\ \texttt{"ports"}:P,\ \texttt{"body"}:B\}\\
P &::=& \{p_i:\Pi_i\}_{i=1}^{n}\\
\Pi &::=& \{\texttt{"dir"}:\delta,\ \texttt{"width"}:w\}\\
\delta &::=& \texttt{"input"} \mid \texttt{"output"}\\
B &::=& [O_1,\ldots,O_n]\\[0.15em]
O &::=& \{\texttt{"id"}:r,\ \texttt{"op"}:o,\ \texttt{"args"}:A,\ \kappa\}\\
  &\mid& \{\texttt{"id"}:[x_1,\ldots,x_n],\ \texttt{"op"}:\texttt{"instance"},\\
  &&\quad \texttt{"module"}:m,\ \texttt{"args"}:\Gamma,\ \kappa\}\\
  &\mid& \{\texttt{"op"}:\texttt{"output"},\ \texttt{"args"}:\Gamma\}\\[0.15em]
r &::=& x \mid [x_1,\ldots,x_n]\\
A &::=& [x_1,\ldots,x_n] \mid \Gamma\\
\Gamma &::=& \{p_i:x_i\}_{i=1}^{n}\\
\kappa &::=& \epsilon \mid a:v,\kappa
\end{array}
\]
\endgroup
\caption{Formal syntax of the JSON-based IR.}
\label{fig:json-ir-syntax}
\end{figure}

\minisection{IR Semantics}
We formalize the syntax of the \CPPLIR in \Cref{fig:json-ir-syntax}.
The grammar follows the concrete JSON structure: a design is a list of modules, each module contains a port map and a body of dictionary-based operations.
The body of each module consists of operations and module instances, each identified by a unique identifier except for output operations.
This identifier scheme enforces a definition-before-use discipline, maintaining consistency with CIRCT IR semantics.
The \CPPL compiler can also insert predefined items, such as module instances extracted from the frontend DSL, into the module body before LLM-generated operations are compiled.
The syntax rules serve as the system prompt to guide LLMs in generating \CPPLIR that can be correctly parsed and compiled to CIRCT IR.
The \CPPLIR used in the \CPPL framework bridges LLM generation capabilities with CIRCT's static compilation flow, enabling more reliable LLM-assisted hardware design.

\begin{figure*}[t]
\centering
\begingroup
\scriptsize
\setlength{\arraycolsep}{3pt}
\setlength{\tabcolsep}{0pt}
\renewcommand{\arraystretch}{1.8}
\newcommand{\typerule}[3]{%
\begin{minipage}[t]{0.32\textwidth}
\centering
\vspace{0pt}
$\dfrac{#1}{\begin{gathered}#2\end{gathered}}\ \textsc{#3}$
\end{minipage}}
\begin{tabular}{@{}ccc@{}}
\typerule{}
{\Sigma;\Gamma \vdash \mathsf{const}(x,w)\\
 \Rightarrow \Gamma[x\mapsto i^w]}
{T-Const}
&
\typerule{\Gamma(a)=i^w \quad u\in\mathsf{Unary}}
{\Sigma;\Gamma \vdash \mathsf{unary}_{u}(x,a)\\
 \Rightarrow \Gamma[x\mapsto i^w]}
{T-Unary}
&
\typerule{\Gamma(a)=i^w \quad r\in\mathsf{Reduce}}
{\Sigma;\Gamma \vdash \mathsf{reduce}_{r}(x,a)\\
 \Rightarrow \Gamma[x\mapsto i^1]}
{T-Reduce}
\\
\typerule{\Gamma(a_1)=i^w \quad \Gamma(a_2)=i^w \quad b\in\mathsf{Binary}}
{\Sigma;\Gamma \vdash \mathsf{bin}_{b}(x,a_1,a_2)\\
 \Rightarrow \Gamma[x\mapsto i^w]}
{T-Bin}
&
\typerule{\Gamma(a_1)=i^w \quad \Gamma(a_2)=i^w \quad c\in\mathsf{Cmp}}
{\Sigma;\Gamma \vdash \mathsf{cmp}_{c}(x,a_1,a_2)\\
 \Rightarrow \Gamma[x\mapsto i^1]}
{T-Cmp}
&
\typerule{\Gamma(s)=i^1 \quad \Gamma(t)=i^w \quad \Gamma(f)=i^w}
{\Sigma;\Gamma \vdash \mathsf{mux}(x,s,t,f)\\
 \Rightarrow \Gamma[x\mapsto i^w]}
{T-Mux}
\\
\typerule{\Gamma(a)=i^{w_s} \quad w_s \leq w}
{\Sigma;\Gamma \vdash \mathsf{cast}(x,a,w)\\
 \Rightarrow \Gamma[x\mapsto i^w]}
{T-Cast}
&
\typerule{\Gamma(a_i)=i^{w_i}\ (1\leq i\leq n)}
{\Sigma;\Gamma \vdash \mathsf{concat}(x,\vec{a})\\
 \Rightarrow \Gamma[x\mapsto i^{\sum_{i=1}^{n}w_i}]}
{T-Concat}
&
\typerule{\Gamma(a)=i^{w_s} \quad l+w\leq w_s}
{\Sigma;\Gamma \vdash \mathsf{extract}(x,a,l,w)\\
 \Rightarrow \Gamma[x\mapsto i^w]}
{T-Extract}
\\
\typerule{\Gamma(d)=i^w \quad \Gamma(clk)=i^1 \quad \Gamma(en)=i^1}
{\Sigma;\Gamma \vdash \mathsf{reg}(x,d,clk,en,w)\\
 \Rightarrow \Gamma[x\mapsto i^w]}
{T-Reg}
&
\typerule{\begin{gathered}
  \Sigma(m)=(\{p_i:i^{w_i}\}_{i=1}^{n},\{q_j:i^{v_j}\}_{j=1}^{k})\\
  \rho(p_i)=a_i \quad \Gamma(a_i)=i^{w_i}\ (1\leq i\leq n)
  \end{gathered}}
{\Sigma;\Gamma \vdash \mathsf{inst}(\vec{x},m,\rho)\\
 \Rightarrow \Gamma[x_j\mapsto i^{v_j}]_{j=1}^{k}}
{T-Inst}
&
\typerule{O(p_i)=i^{w_i} \quad \rho(p_i)=a_i \quad \Gamma(a_i)=i^{w_i}}
{\Sigma;\Gamma \vdash \mathsf{out}(\rho,O)\\
 \Rightarrow \Gamma}
{T-Out}
\end{tabular}
\endgroup
\caption{Representative width inference rules for \CPPLIR.}
\label{fig:type-infer}
\vspace{-0.8em}
\end{figure*}

\subsection{Backend Compilation}
\label{sec:compile}
The \CPPL compiler lowers \CPPLIR into CIRCT IR and invokes CIRCT's code-generation pipeline to emit Verilog.
During JSON-to-CIRCT compilation, two steps are central to reliability: \textit{type inference} and \textit{code refinement}.

\minisection{Type Inference}
\CPPLIR uses bitwidths as its type system, aligning with CIRCT IR's type system where \texttt{hw.integer} types represent unsigned integers of specific widths.
For a module $m$, its signature environment $\Sigma(m)=(I,O)$ records the declared input and output port types.
Starting from the input-port environment $\Gamma_0=I$, the compiler infers each internal SSA value by propagating widths through operations.
We write $\Sigma;\Gamma \vdash O \Rightarrow \Gamma'$ to denote that operation $O$ is well typed under $\Gamma$ and extends it to $\Gamma'$.
\Cref{fig:type-infer} summarizes the representative rules: arithmetic and logic operations preserve width, comparisons and reductions produce one-bit values, muxes require equal-width branches, and instances obtain result widths from callee signatures.
The implementation applies these rules to a fixpoint and then checks the inferred output values against $O$.
This rejects width errors such as unequal binary operands, non-1-bit mux selectors, invalid extracts, and output bindings whose inferred widths do not match the declared ports.

\begin{figure}[ht]
\centering
\begin{minted}[numbersep=3pt, linenos, xleftmargin=7pt,breaklines,breakanywhere, frame=single,fontsize=\tiny]{mlir}
module {
  hw.module @Adder8(in %a : i8, in %b : i8, out sum : i8) {
    %0 = comb.add %a, %b : i8
    hw.output %0 : i8
  }
  hw.module @ALU(in %op_code : i2, in %op_a : i8, in %op_b : i8, out res : i8, out zero : i1) {
    %Adder8_0.sum = hw.instance "Adder8_0" @Adder8(a: %op_a: i8, b: %op_b: i8) -> (sum: i8)
    %0 = comb.extract %op_code from 0 : (i2) -> i1
    %1 = comb.extract %op_code from 1 : (i2) -> i1
    %2 = comb.sub %op_a, %op_b : i8
    %3 = comb.and %op_a, %op_b : i8
    %4 = comb.or %op_a, %op_b : i8
    %5 = comb.mux %0, %2, %Adder8_0.sum : i8
    %6 = comb.mux %0, %4, %3 : i8
    %7 = comb.mux %1, %6, %5 : i8
    %c0_i8 = hw.constant 0 : i8
    %8 = comb.icmp ne %7, %c0_i8 : i8
    %true = hw.constant true
    %9 = comb.xor %8, %true : i1
    hw.output %7, %9 : i8, i1
  }
}
\end{minted}

\caption{The generated CIRCT IR of the ALU example compiled from the JSON-based IR in \Cref{fig:alu-json}.}
\label{fig:alu-mlir}
\end{figure}

\minisection{Code Refinement}
In addition to reporting type errors, the \CPPL compiler performs static analyses on \CPPLIR to identify potential issues before CIRCT lowering.
The core analyses are listed as follows:
\begin{itemize}[leftmargin=*, itemindent=0pt, listparindent=0pt]
\item \textbf{Syntax validation}: The \CPPL compiler validates the syntax of the generated \CPPLIR to ensure it conforms to the defined grammar and structure. This includes verifying identifier symbol tables, terminator coverage, and instance graph connectivity.
\item \textbf{Dead code elimination}: The \CPPL compiler identifies and eliminates unused operations or module instances in the final design, reducing unnecessary IR before backend compilation.
\item \textbf{Combinational loop detection}: The \CPPL compiler detects combinational loops in the generated \CPPLIR, which would result in non-synthesizable designs.
\end{itemize}
Each identified issue is wrapped in a Python Exception with a descriptive error message, which is used to provide feedback to the LLM for iterative refinement of the generated \CPPLIR.
This refinement loop is shown in \Cref{fig:cppl}.
The maximum number of iterations is denoted as $N_{\max}$, and the iteration process continues until a valid \CPPLIR is generated or the maximum number of iterations is reached.

\begin{figure}[ht]
\centering
\begin{minted}[numbersep=3pt,breaklines,breakanywhere,linenos,xleftmargin=7pt,frame=single,fontsize=\tiny,baselinestretch=0.92]{verilog}
module ALU(
  input  [1:0] op_code,
  input  [7:0] op_a,
               op_b,
  output [7:0] res,
  output       zero
);

  wire [7:0] _Adder8_0_sum;
  Adder8 Adder8_0 (
    .a   (op_a),
    .b   (op_b),
    .sum (_Adder8_0_sum)
  );
  wire [7:0] _GEN =
    op_code[1]
      ? (op_code[0] ? op_a | op_b : op_a & op_b)
      : op_code[0] ? op_a - op_b : _Adder8_0_sum;
  assign res = _GEN;
  assign zero = ~(|_GEN);
endmodule
\end{minted}

\caption{The generated Verilog code of the ALU example compiled from the CIRCT IR in \Cref{fig:alu-mlir}.}
\label{fig:alu-verilog}
\end{figure}

\subsection{Code Generation}
\Cref{fig:alu-mlir} shows the CIRCT IR generated for the ALU example, compiled from the \CPPLIR in \Cref{fig:alu-json}.
Module declarations and port mappings are directly translated from the \CPPLIR, while operations are generated by the LLM based on the implementation intent described in the \CPPL code's docstring.
The generated CIRCT IR is then passed through CIRCT's code generation pipeline to emit the final Verilog code shown in \Cref{fig:alu-verilog}.
The resulting Verilog code is syntactically valid and accepted by backend EDA tools in our evaluation.

\Cref{fig:pipeline} illustrates the compilation pipeline for Verilog code generation. The highlighted lines show the optimization passes applied before Verilog emission, which include the standard optimizations discussed in \Cref{sec:motivation}.
CIRCT's code generation pipeline applies standard canonicalization and optimization passes, thereby improving the quality of the emitted Verilog without requiring the LLM to implement these transformations explicitly.

Overall, \CPPL provides a structured and modular approach to LLM-assisted hardware generation, combining LLM generation with compiler-checked legality and backend optimization.

\begin{figure}[ht]
\centering
\begin{minted}[numbersep=3pt, linenos, xleftmargin=7pt, breaklines,breakanywhere,frame=single,fontsize=\tiny, baselinestretch=0.92, highlightlines={4, 5}]{python}
pipeline = [
    "hw.module(lower-seq-hlmem)", 
    "lower-seq-to-sv",
    "canonicalize",
    "cse",
    "hw.module(prettify-verilog)",
    "hw.module(hw-cleanup)",
]
pm = passmanager.PassManager.parse("builtin.module(" + ",".join(pipeline) + ")")
pm.run(mod.operation)
\end{minted}

\caption{The CIRCT pass pipeline used in the \CPPL compiler to generate Verilog code from the generated CIRCT IR.}
\label{fig:pipeline}
\end{figure}

\section{Evaluation}
\label{sec:experiments}

\begin{table*}[!t]
\centering
\scriptsize
\setlength{\tabcolsep}{2pt}
\begin{minipage}[t]{0.49\textwidth}
\centering
\caption{Syntax and functional correctness of direct SystemVerilog generation on RTLLM.}
\label{tab:pass-sv}
\renewcommand{\arraystretch}{1.05}
\resizebox{0.9\linewidth}{!}{%
\begin{tabular}{lcccccc}
\hline
\multirow{2}{*}{Model}
& \multicolumn{3}{c}{Syntax}
& \multicolumn{3}{c}{Functionality} \\
\cmidrule(lr){2-4}\cmidrule(lr){5-7}
& pass@1 & pass@2 & pass@5 & pass@1 & pass@2 & pass@5 \\
\hline
Claude-opus-4.6 & 0.932 & 0.932 & 0.956 & 0.725 & 0.761 & 0.778 \\
GPT-5.3-codex   & 0.957 & 0.963 & 0.964 & 0.675 & 0.710 & 0.752 \\
Gemini-3.1-pro  & 0.639 & 0.709 & 0.759 & 0.554 & 0.603 & 0.642 \\
Kimi-k2.5       & 0.925 & 0.963 & 0.964 & 0.704 & 0.754 & 0.782 \\
Qwen-3.6-plus   & 0.918 & 0.943 & 0.961 & 0.679 & 0.737 & 0.798 \\
Minimax-2.5     & 0.689 & 0.799 & 0.912 & 0.543 & 0.631 & 0.748 \\
\hline
\end{tabular}%
}
\end{minipage}\hfill
\begin{minipage}[t]{0.49\textwidth}
\centering
\caption{Syntax and functional correctness of direct CIRCT IR generation on RTLLM.}
\label{tab:pass-circt}
\renewcommand{\arraystretch}{1.05}
\resizebox{0.9\linewidth}{!}{%
\begin{tabular}{lcccccc}
\hline
\multirow{2}{*}{Model}
& \multicolumn{3}{c}{Syntax}
& \multicolumn{3}{c}{Functionality} \\
\cmidrule(lr){2-4}\cmidrule(lr){5-7}
& pass@1 & pass@2 & pass@5 & pass@1 & pass@2 & pass@5 \\
\hline
Claude-opus-4.6 & 0.661 & 0.790 & 0.884 & 0.500 & 0.598 & 0.693 \\
GPT-5.3-codex   & 0.425 & 0.542 & 0.679 & 0.318 & 0.394 & 0.471 \\
Gemini-3.1-pro  & 0.257 & 0.316 & 0.381 & 0.239 & 0.285 & 0.328 \\
Kimi-k2.5       & 0.386 & 0.476 & 0.580 & 0.264 & 0.325 & 0.384 \\
Qwen-3.6-plus   & 0.093 & 0.137 & 0.208 & 0.082 & 0.117 & 0.168 \\
Minimax-2.5     & 0.154 & 0.233 & 0.403 & 0.089 & 0.113 & 0.171 \\
\hline
\end{tabular}%
}
\end{minipage}
\end{table*}

\begin{table}[!t]
\centering
\caption{Functional correctness of generated Verilog code from CPPL implementations on different models.}
\label{tab:cppl-results}
\scriptsize
\setlength{\tabcolsep}{4pt}
\renewcommand{\arraystretch}{1.05}
\resizebox{0.7\linewidth}{!}{%
\begin{tabular}{lccc}
\hline
\multirow{2}{*}{Model}
& \multicolumn{3}{c}{Functionality} \\
\cmidrule(lr){2-4}
& pass@1 & pass@2 & pass@5 \\
\hline
CPPL (Claude-opus-4.6)    & 0.800 & 0.817 & 0.838 \\
CPPL (GPT-5.3-codex) & 0.782 & 0.840 & 0.874 \\
CPPL (Qwen-3.6-plus)      & 0.768 & 0.812 & 0.821 \\
\hline
\end{tabular}%
}
\end{table}

\begin{table}[!t]
\centering
\caption{Average AIG node counts of CPPL-generated Verilog (Claude-opus-4.6) vs. RTLLM reference designs, with and without CIRCT optimizations (\Cref{fig:pipeline}).}
\label{tab:aig-results}
\scriptsize
\setlength{\tabcolsep}{2pt}
\renewcommand{\arraystretch}{1.05}
\newcommand{\refcell}[1]{\cellcolor{blue!8}#1}
\newcommand{\best}[1]{\cellcolor{green!25}\textbf{#1}}
\newcommand{\near}[1]{\cellcolor{yellow!25}#1}
\newcommand{\midcell}[1]{\cellcolor{CUHKorange!18}#1}
\newcommand{\bad}[1]{\cellcolor{red!18}#1}
\newcommand{\missing}{\cellcolor{black!12}--}
\newcommand{\legendbox}[1]{\begingroup\setlength{\fboxsep}{0pt}\fcolorbox{black!25}{#1}{\rule{0pt}{0.9ex}\rule{0.9ex}{0pt}}\endgroup}
\resizebox{0.6\linewidth}{!}{%
\begin{tabular}{lrrrr}
\hline
Design & Ref & Verilog & w/o Opt & Opt \\
\hline
ram              & \refcell{700}   & \bad{4147}       & \near{458.67}    & \best{413}    \\
accu             & \refcell{318}   & \near{309}       & \best{290}       & \best{290}    \\
adder\_16bit     & \refcell{608}   & \near{368}       & \midcell{386}    & \best{352}    \\
adder\_32bit     & \refcell{1982}  & \near{1038.1}    & \midcell{1045}   & \best{784}    \\
adder\_8bit      & \refcell{160}   & \midcell{200}        & \near{184}       & \best{168}    \\
adder\_pipe\_64bit & \refcell{1892} & \best{1892}      & \near{1909}      & \best{1892}   \\
alu              & \refcell{10159} & \best{6571.75}   & \midcell{8642.67}    & \near{8413}   \\
calendar         & \refcell{352}   & \bad{352}        & \near{295.78}    & \best{229}    \\
div\_16bit       & \refcell{16397} & \bad{6603}       & \bad{5945}       & \best{3945}   \\
edge\_detect     & \refcell{14}    & \best{7}         & \best{7}         & \best{7}      \\
freq\_div        & \refcell{207}   & \midcell{170.1}  & \near{140}       & \best{125}    \\
fsm              & \refcell{130}   & \bad{91.89}      & \bad{279}        & \best{58.12}  \\
multi\_16bit     & \refcell{2409}  & \near{2644.9}    & \best{2146}      & \best{2146}   \\
multi\_pipe\_4bit & \refcell{345}  & \near{345}       & \best{333}       & \best{333}    \\
multi\_pipe\_8bit & \refcell{1936} & \midcell{1877}   & \near{1840.71}   & \best{1834.6} \\
parallel2serial  & \refcell{63}    & \near{63}        & \best{49}        & \best{49}     \\
radix2\_div      & \refcell{941}   & \missing         & \best{1863}      & \best{1863}   \\
serial2parallel  & \refcell{127}   & \missing         & \best{81}        & \best{81}     \\
signal\_generator & \refcell{154}  & \missing         & \bad{221}        & \best{125}    \\
traffic\_light   & \refcell{493}   & \missing         & \near{437}       & \best{425.4}  \\
\hline
Geo. Avg         & \refcell{513.89} & \midcell{525.09} & \near{439.88}    & \best{368.16} \\
\hline
\end{tabular}%
}
\parbox{\linewidth}{\scriptsize\emph{Note:} Lower is better. \legendbox{blue!8} RTLLM reference; among generated results, \legendbox{green!25} best in each row, \legendbox{yellow!25}/\legendbox{CUHKorange!18} increasingly larger node counts, and \legendbox{red!18} more than $1.5\times$ the row best; \legendbox{black!12} unavailable.}
\end{table}

\subsection{Experiment Setup}
\label{sec:exp-setup}

We evaluate the performance of \CPPL on the RTLLM benchmark~\cite{rtllm}, which consists of 29 problems covering a wide range of hardware design tasks, including combinational logic, sequential logic, and memory design.
The LLMs evaluated in our experiments include Claude-opus-4.6, GPT-5.3-codex, Gemini-3.1-pro, Kimi-k2.5, Qwen-3.6-plus, and Minimax-2.5.
These models are selected based on their strong performance in code generation tasks and their wide adoption in recent research.

We employ the \texttt{pass@1} metric~\cite{verilogeval, codeeval} used in LLM code generation evaluation tasks to assess the performance of LLMs in generating correct CIRCT IR and Verilog code, which is defined as follows:
\begin{equation}
    pass@k := \mathbbm{E}_{\text{problems}} \left[ 1 - \frac{\binom{n-c}{k}}{\binom{n}{k}} \right],
\end{equation}
where $n\ge k$ samples are generated per problem and a problem is solved if any of the $k$ samples passes the unit tests.
In our experiments, we sample $n = 10$ code completions per problem for each downstream task and measure \texttt{pass@k} with $k = 1, 2, 5$.
For all evaluation tasks, we set the model temperature to $0.1$ to balance diversity and correctness in generated code.
The maximum token length is set to $4096$, covering reasoning traces, regenerations, and final code generation.
For the \CPPL refinement loop, we set the maximum number of refinement attempts to $N_{\max}=3$.

We evaluate the main stages of the \CPPL design flow, including syntax correctness, functionality verification, and backend performance evaluation.
Syntax correctness and functionality verification are assessed using \texttt{iverilog}~\cite{iverilog} with the simulation scripts and unit tests from the RTLLM benchmark.
Backend performance evaluation is conducted using the \texttt{Yosys} synthesis tool~\cite{yosys} to measure resource utilization of the generated RTL code.
The prompts used for generating CIRCT IR, Verilog code, and \CPPL implementations share the same design description, differing only in the instruction specifying the target output format.
For CIRCT IR generation, we include additional system prompts instructing the LLM to adhere to the CIRCT IR format and semantics compatible with our experimental framework.
We use direct Verilog and direct CIRCT IR generation as the primary baselines to isolate the effect of replacing unconstrained text generation with a compiler-checkable frontend representation.
Existing LLM4RTL systems~\cite{origen,betterv,rtlcoder,rtlpp,mage,spec2rtl,rtlrewriter} primarily target final RTL and often incorporate task-specific prompting, fine-tuning, retrieval, or repair strategies; they do not evaluate LLM generation of CIRCT IR or expose a comparable compiler-IR path.
Therefore, a direct end-to-end comparison with such systems would conflate our representation and compiler design with orthogonal model- and workflow-level techniques.
\CPPL is complementary to these techniques: the same prompting, retrieval, or repair strategies could be used to generate \CPPLIR instead of raw RTL.

\subsection{Performance on Verilog and CIRCT IR Generation}
\label{sec:exp-verilog-circt}
\Cref{tab:pass-sv} and \Cref{tab:pass-circt} present the syntax and functional correctness of direct SystemVerilog and CIRCT IR generation on the RTLLM benchmark, respectively.

Direct Verilog generation achieves high syntax correctness but still leaves a large gap to functional correctness.
Four models exceed 0.9 syntax \texttt{pass@1}, with GPT-5.3-codex reaching 0.957, yet the best functional \texttt{pass@1} is only 0.725 from Claude-opus-4.6.
This indicates that syntactically valid RTL does not necessarily implement the intended behavior, and that syntax checking alone is insufficient as a design-flow interface.

Direct CIRCT IR generation is substantially harder.
Even with format-specific system prompts, the syntax \texttt{pass@1} drops sharply for most models; for example, Qwen-3.6-plus and Minimax-2.5 achieve only 0.093 and 0.154, respectively.
Functional correctness is also limited, with all models except Claude-opus-4.6 staying below 0.5.
These results motivate an intermediate frontend representation that preserves the benefits of compiler-based hardware generation without requiring LLMs to emit raw CIRCT IR.

\subsection{Performance of CPPL}
\label{sec:exp-cppl}

To evaluate the effectiveness of \CPPL, we conduct experiments on selected models that span different direct CIRCT IR generation capabilities: Claude-opus-4.6, GPT-5.3-codex, and Qwen-3.6-plus.
\Cref{tab:cppl-results} presents the functional correctness of generated Verilog code from \CPPL implementations on different models.
All evaluated models produce syntactically valid CIRCT IR and Verilog through the \CPPL compilation flow for all RTLLM designs in our runs.
This result indicates that the JSON-based \CPPLIR and compiler checks avoid a major source of raw CIRCT IR generation failures.
The generated Verilog also achieves higher functional correctness than both direct Verilog generation and direct CIRCT IR generation across the evaluated models.
Notably, Qwen-3.6-plus, which performs poorly in direct CIRCT IR generation, reaches a \texttt{pass@1} score of 0.768 with \CPPL, compared with 0.082 for direct CIRCT IR functional correctness.
These results show that \CPPL can expose compiler-backed hardware generation to models that are otherwise unreliable at emitting raw CIRCT IR.

\subsection{Synthesis Node Reduction}
\label{sec:exp-synthesis}
To evaluate the backend performance of \CPPL-generated Verilog code, we conduct synthesis experiments using Yosys to measure AIG node counts.
Each design is processed by \texttt{read\_verilog -sv}, \texttt{hierarchy -top}, \texttt{synth -top -noabc}, \texttt{aigmap}, and \texttt{stat -json}; we report the post-\texttt{aigmap} cell count.
Similar synthesis-level proxies have also been used in prior works~\cite{betterv,symrtlo,logicnn}.
To isolate the impact of compiler optimizations, we perform an ablation study comparing designs generated with and without CIRCT optimization passes in the compilation pipeline.
As shown in \Cref{tab:aig-results}, \CPPL-generated Verilog often yields lower post-\texttt{aigmap} node counts than direct LLM-generated Verilog and the RTLLM reference designs.
The ablation further shows that enabling CIRCT optimization passes reduces the geometric average node count from 439.88 to 368.16, a 16.3\% reduction.
These results support the central design choice of \CPPL: once LLM output is represented as a compiler-checkable circuit IR, conventional hardware compiler optimizations can improve the emitted RTL without requiring the model to implement those transformations directly.
\FloatBarrier

\section{Conclusion}
\label{sec:conclu}

In this paper, we presented \CPPL, a compiler-mediated framework for LLM-assisted hardware generation.
\CPPL combines a Python frontend DSL, a statically checkable JSON-based circuit IR, and deterministic lowering to CIRCT, allowing LLMs to generate hardware through a structured representation rather than raw RTL or raw compiler IR.
By recovering widths from module ports, validating generated operations, and applying CIRCT backend optimizations, \CPPL exposes compiler checks and transformations to the LLM generation flow.
Our evaluation on RTLLM shows that this design improves functional correctness over direct Verilog and CIRCT IR generation and reduces post-synthesis AIG node counts through compiler optimization.
These results point to compiler-mediated generation as a promising direction for reliable LLM hardware design that can benefit from backend optimization.

{
    \bibliographystyle{IEEEtran}
    \bibliography{refs/Top-sim,refs/ref}

\begin{thebibliography}{10}
\providecommand{\url}[1]{#1}
\csname url@samestyle\endcsname
\providecommand{\newblock}{\relax}
\providecommand{\bibinfo}[2]{#2}
\providecommand{\BIBentrySTDinterwordspacing}{\spaceskip=0pt\relax}
\providecommand{\BIBentryALTinterwordstretchfactor}{4}
\providecommand{\BIBentryALTinterwordspacing}{\spaceskip=\fontdimen2\font plus
\BIBentryALTinterwordstretchfactor\fontdimen3\font minus
  \fontdimen4\font\relax}
\providecommand{\BIBforeignlanguage}[2]{{%
\expandafter\ifx\csname l@#1\endcsname\relax
\typeout{** WARNING: IEEEtran.bst: No hyphenation pattern has been}%
\typeout{** loaded for the language `#1'. Using the pattern for}%
\typeout{** the default language instead.}%
\else
\language=\csname l@#1\endcsname
\fi
#2}}
\providecommand{\BIBdecl}{\relax}
\BIBdecl

\bibitem{qwen3}
A.~Yang, A.~Li, B.~Yang, B.~Zhang, B.~Hui, B.~Zheng, B.~Yu, C.~Gao, C.~Huang,
  C.~Lv \emph{et~al.}, ``{Qwen3 technical report},'' \emph{arXiv preprint},
  2025.

\bibitem{deepseek}
A.~Liu, B.~Feng, B.~Xue, B.~Wang, B.~Wu, C.~Lu, C.~Zhao, C.~Deng, C.~Zhang,
  C.~Ruan \emph{et~al.}, ``{Deepseek-v3 technical report},'' \emph{arXiv
  preprint}, 2024.

\bibitem{gpt}
J.~Achiam, S.~Adler, S.~Agarwal, L.~Ahmad, I.~Akkaya, F.~L. Aleman, D.~Almeida,
  J.~Altenschmidt, S.~Altman, S.~Anadkat \emph{et~al.}, ``{Gpt-4 technical
  report},'' \emph{arXiv preprint}, 2023.

\bibitem{verilogeval}
M.~Liu, N.~Pinckney, B.~Khailany, and H.~Ren, ``{VerilogEval: Evaluating Large
  Language Models for Verilog Code Generation},'' in \emph{Proc.~ICCAD}, 2023.

\bibitem{rtllm}
Y.~Lu, S.~Liu, Q.~Zhang, and Z.~Xie, ``{RTLLM: An Open-source Benchmark for
  Designing RTL Generation with Large Language Model},'' in
  \emph{Proc.~ASPDAC}, 2024.

\bibitem{origen}
F.~Cui, C.~Yin, K.~Zhou, Y.~Xiao, G.~Sun, Q.~Xu, Q.~Guo, Y.~Liang, X.~Zhang,
  D.~Song \emph{et~al.}, ``{Origen: Enhancing RTL Code Generation with
  Code-to-Code Augmentation and Self-Reflection},'' in \emph{Proc.~ICCAD},
  2024.

\bibitem{betterv}
Z.~Pei, H.-L. Zhen, M.~Yuan, Y.~Huang, and B.~Yu, ``{BetterV: Controlled
  Verilog Generation with Discriminative Guidance},'' \emph{Proc.~ICML}, 2024.

\bibitem{rtlcoder}
S.~Liu, W.~Fang, Y.~Lu, J.~Wang, Q.~Zhang, H.~Zhang, and Z.~Xie, ``{RTLCoder:
  Fully Open-Source and Efficient LLM-Assisted RTL Code Generation
  Technique},'' \emph{IEEE TCAD}, 2024.

\bibitem{circt}
``{CIRCT: Circuit IR Compilers and Tools},'' \url{https://circt.llvm.org/},
  2026.

\bibitem{rtlpp}
M.~Akyash, K.~Azar, and H.~Kamali, ``{RTL++: Graph-Enhanced LLM for RTL Code
  Generation},'' in \emph{Proc.~ICLAD}, 2025.

\bibitem{mage}
Y.~Zhao, H.~Zhang, H.~Huang, Z.~Yu, and J.~Zhao, ``{MAGE: A Multi-Agent Engine
  for Automated RTL Code Generation},'' in \emph{Proc.~DAC}, 2025.

\bibitem{spec2rtl}
Z.~Yu, M.~Liu, M.~Zimmer, Y.~Celine, Y.~Liu, and H.~Ren, ``{Spec2RTL-Agent:
  Automated Hardware Code Generation from Complex Specifications Using LLM
  Agent Systems},'' in \emph{Proc.~ICLAD}, 2025.

\bibitem{symrtlo}
Y.~Wang, W.~Ye, P.~Guo, Y.~He, Z.~Wang, B.~Tian, S.~He, G.~Sun, Z.~Shen,
  S.~Chen \emph{et~al.}, ``{SymRTLO: Enhancing RTL Code Optimization with LLMs
  and Neuron-Inspired Symbolic Reasoning},'' \emph{Proc.~NIPS}, 2026.

\bibitem{rtlrewriter}
X.~Yao, Y.~Wang, X.~Li, Y.~Lian, R.~Chen, L.~Chen, M.~Yuan, H.~Xu, and B.~Yu,
  ``{RTLRewriter: Methodologies for Large Models Aided RTL Code
  Optimization},'' in \emph{Proc.~ICCAD}, 2024.

\bibitem{aspen}
N.~Zhang, C.~Deng, J.~M. Kuehn, C.-T. Ho, C.~Yu, Z.~Zhang, and H.~Ren,
  ``{ASPEN: LLM-Guided E-Graph Rewriting for RTL Datapath Optimization},'' in
  \emph{Proc.~MLCAD}, 2025.

\bibitem{learntodebug}
Y.~Bai and H.~Ren, ``{Learning to Debug: LLM-Organized Knowledge Trees for
  Solving RTL Assertion Failures},'' \emph{arXiv preprint}, 2025.

\bibitem{hlsdebugger}
J.~Wang, S.~Liu, Y.~Lu, and Z.~Xie, ``{HLSDebugger: Identification and
  Correction of Logic Bugs in HLS Code with LLM Solutions},'' in
  \emph{Proc.~ICCAD}, 2025.

\bibitem{assertllm}
Z.~Yan, W.~Fang, M.~Li, M.~Li, S.~Liu, Z.~Xie, and H.~Zhang, ``{Assertllm:
  Generating hardware verification assertions from design specifications via
  multi-llms},'' in \emph{Proc.~ASPDAC}, 2025.

\bibitem{mlir}
C.~Lattner, M.~Amini, U.~Bondhugula, A.~Cohen, A.~Davis, J.~Pienaar, R.~Riddle,
  T.~Shpeisman, N.~Vasilache, and O.~Zinenko, ``{MLIR: Scaling compiler
  infrastructure for domain specific computation},'' in \emph{Proc.~CGO}, 2021.

\bibitem{assassyn}
J.~Weng, B.~Han, D.~Gao, R.~Gao, W.~Zhang, A.~Zhong, C.~Xu, J.~Xin, Y.~Luo,
  L.~W. Wills \emph{et~al.}, ``{Assassyn: A Unified Abstraction for
  Architectural Simulation and Implementation},'' in \emph{Proc.~ISCA}, 2025.

\bibitem{cement}
Y.~Xiao, Z.~Luo, K.~Zhou, and Y.~Liang, ``{Cement: Streamlining FPGA Hardware
  Design with Cycle-Deterministic EHDL and Synthesis},'' in \emph{Proc.~FPGA},
  2024.

\bibitem{chisel}
J.~Bachrach, H.~Vo, B.~Richards, Y.~Lee, A.~Waterman, R.~Avi{\v{z}}ienis,
  J.~Wawrzynek, and K.~Asanovi{\'c}, ``{Chisel: constructing hardware in a
  scala embedded language},'' in \emph{Proc.~DAC}, 2012.

\bibitem{calyx}
R.~Nigam, S.~Thomas, Z.~Li, and A.~Sampson, ``{A Compiler Infrastructure For
  Accelerator Generators},'' in \emph{Proc.~ASPLOS}, 2021.

\bibitem{pipertl}
S.~Yin, F.~Liu, L.~Zou, R.~Fu, W.~Zhao, C.~Bai, T.-Y. Ho, Y.~Xie, and B.~Yu,
  ``{PipeRTL: Timing-Aware Pipeline Optimization at IR-Level for RTL
  Generation},'' \emph{arXiv preprint}, 2026.

\bibitem{combrewriter}
H.~Zheng, Z.~He, S.~Yin, Y.~Ma, and B.~Yu, ``{CombRewriter: Enabling
  Combinational Logic Simplification in MLIR-Based Hardware Compiler},'' in
  \emph{Proc.~ASPDAC}, 2026.

\bibitem{llhd}
F.~Schuiki, A.~Kurth, T.~Grosser, and L.~Benini, ``{LLHD: A Multi-level
  Intermediate Representation For Hardware Description Languages},'' in
  \emph{Proc.~PLDI}, 2020.

\bibitem{Khronos}
K.~Zhou, Y.~Liang, Y.~Lin, R.~Wang, and R.~Huang, ``{Khronos: Fusing Memory
  Access for Improved Hardware RTL Simulation},'' in \emph{Proc.~MICRO}, 2023.

\bibitem{llvm}
C.~Lattner and V.~Adve, ``{LLVM: A compilation framework for lifelong program
  analysis \& transformation},'' in \emph{Proc.~CGO}, 2004.

\bibitem{llm4ir}
H.~Jiang, J.~Zhu, Y.~Wan, B.~Fang, H.~Zhang, R.~Jin, and Q.~Guan, ``{Can Large
  Language Models Understand Intermediate Representations in Compilers?}'' in
  \emph{Proc.~ICML}, 2025.

\bibitem{appl}
H.~Dong, Q.~Su, Y.~Gao, Z.~Li, Y.~Ruan, G.~Pekhimenko, C.~J. Maddison, and
  X.~Si, ``{Appl: A prompt programming language for harmonious integration of
  programs and large language model prompts},'' in \emph{Proc.~ACL}, 2025.

\bibitem{sglang}
L.~Zheng, L.~Yin, Z.~Xie, C.~Sun, J.~Huang, C.~H. Yu, S.~Cao, C.~Kozyrakis,
  I.~Stoica, J.~E. Gonzalez \emph{et~al.}, ``{Sglang: Efficient execution of
  structured language model programs},'' \emph{Proc.~NIPS}, 2024.

\bibitem{vllm}
W.~Kwon, Z.~Li, S.~Zhuang, Y.~Sheng, L.~Zheng, C.~H. Yu, J.~Gonzalez, H.~Zhang,
  and I.~Stoica, ``{Efficient memory management for large language model
  serving with pagedattention},'' in \emph{Proc.~SOSP}, 2023.

\bibitem{llamaindex}
``{LlamaIndex},'' \url{https://github.com/jerryjliu/llama_index}, 2022.

\bibitem{codeeval}
M.~Chen, J.~Tworek, H.~Jun, Q.~Yuan, H.~P. D.~O. Pinto, J.~Kaplan, H.~Edwards,
  Y.~Burda, N.~Joseph, G.~Brockman \emph{et~al.}, ``{Evaluating large language
  models trained on code},'' \emph{arXiv preprint}, 2021.

\bibitem{iverilog}
S.~Williams, ``{The ICARUS Verilog Compilation System},''
  \url{https://steveicarus.github.io/iverilog/}, 2002.

\bibitem{yosys}
C.~Wolf, J.~Glaser, and J.~Kepler, ``{Yosys-a free verilog synthesis suite},''
  \url{https://yosyshq.net/yosys/}, 2013.

\bibitem{logicnn}
X.~Li, X.~Li, L.~Chen, X.~Zhang, M.~Yuan, and J.~Wang, ``{Logic synthesis with
  generative deep neural networks},'' \emph{arXiv preprint}, 2024.

\end{thebibliography}
}

\end{document}